\newtheorem{theorem}{Theorem}
\newtheorem{lemma}[theorem]{Lemma}
\newtheorem{definition}[theorem]{Definition}
\newtheorem{remark}[theorem]{Remark}
\title{On the Incompatibility of Quantum State Geometry and Fuzzy Metric Spaces: Three No-Go Theorems\footnote{MSC2020: 46C99, 03E72, 81P45. \\Keywords: Fuzzy metric space; quantum state geometry; interference; embedding obstruction; symmetrization; no-go theorem.}}
\author{Nicola Fabiano\, \orcidlink{0000-0003-1645-2071}}
\affil{``Vin\v{c}a'' Institute of Nuclear Sciences - National 
Institute of the Republic of Serbia, University of Belgrade, Mike Petrovi\'{c}a 
Alasa 12--14, 11351 Belgrade, Serbia; nicola.fabiano@gmail.com}
\date{}
\begin{document}

\maketitle

\begin{abstract}
 We prove three structural impossibility results demonstrating that fuzzy metric spaces cannot capture essential features of quantum state geometry. First, we show they cannot model destructive interference between concepts due to phase insensitivity. Second, we prove there is no distance-preserving embedding from quantum state space into any fuzzy metric space. Third, we establish that fuzzy logic cannot distinguish symmetric from antisymmetric concept combinations -- a fundamental limitation for modeling structured knowledge. These theorems collectively show that fuzzy frameworks are structurally incapable of representing intrinsic uncertainty, where quantum mechanics provides a superior, geometrically coherent alternative.
\end{abstract}


\section{Introduction}

Fuzzy metric spaces \cite{kramosil1975fuzzy,george1994some} have long been proposed as models for reasoning under uncertainty. However, they are built on classical ontologies: points in a set $X$, whose properties are partially known. This makes them ill-suited for representing systems where indeterminacy is primitive -- such as quantum states or cognitive representations.

In contrast, the Hilbert space $\mathcal{H}$ of quantum mechanics provides a natural metric structure via the norm-induced distance:
$$
d_Q(\psi_1,\psi_2) = \|\psi_1 - \psi_2\|.
$$
This distance measures distinguishability between entire probability distributions, not deviations from ideal values.

In this paper, we use quantum state models -- specifically Gaussian wavefunctions -- to prove three rigorous negative results about fuzzy metric spaces:

\begin{enumerate}
    \item They cannot exhibit \emph{interference effects}, even when concept superpositions suggest cancellation.
    \item There is no faithful embedding of quantum state geometry into a fuzzy metric space.
    \item They cannot distinguish between symmetric and antisymmetric compositions of concepts.
\end{enumerate}

These are not mere deficiencies -- they are \emph{structural obstructions}, akin to no-go theorems in physics.

Our work builds on an earlier preprint \cite{preprint1}, which introduced the idea of replacing fuzzy metrics with quantum state geometry. Here, we formalize and extend that insight into rigorous mathematical theorems.

\section{Preliminaries}

\subsection{Fuzzy Metric Spaces}

We recall the definition due to Kramosil and Mich\'alek \cite{kramosil1975fuzzy}, refined by George and Veeramani \cite{george1994some}.

\begin{definition}[Fuzzy Metric Space]
Let $X$ be a non-empty set, $*$ a continuous t-norm on $[0,1]$, and $M: X \times X \times [0,\infty) \to [0,1]$ a function. The triple $(X,M,*)$ is a \textbf{fuzzy metric space} if:
\begin{enumerate}
    \item $M(x,y,0) = 0$
    \item $M(x,y,t) = 1$ for all $t > 0$ iff $x = y$
    \item $M(x,y,t) = M(y,x,t)$
    \item $M(x,y,t) * M(y,z,s) \leq M(x,z,t+s)$
    \item $M(x,y,\cdot): (0,\infty) \to [0,1]$ is continuous
\end{enumerate}
for all $x,y,z \in X$ and $t,s > 0$.
\end{definition}

\begin{remark}
Common t-norms include minimum ($a*b = \min(a,b)$), product ($a*b = ab$), and Łukasiewicz ($a*b = \max(0,a+b-1)$).
\end{remark}

\subsection{Quantum State Geometry}

Let $\mathcal{H} = L^2(\mathbb{R})$ be the Hilbert space of square-integrable functions. For normalized vectors $\psi_1,\psi_2 \in \mathcal{H}$, define the distance
$$
d_Q(\psi_1,\psi_2) := \|\psi_1 - \psi_2\| = \sqrt{\int |\psi_1(x) - \psi_2(x)|^2 dx}.
$$

Physical quantum states are equivalence classes under global phase: $ |\psi\rangle \sim e^{i\theta}|\psi\rangle $. These are called \emph{rays} in projective Hilbert space.

However, in our framework, all wavefunctions are real-valued and positive:
$$
\psi_C(x) = \frac{1}{(\pi \sigma_C^2)^{1/4}} \exp\left(-\frac{(x - \mu_C)^2}{2\sigma_C^2}\right),
$$
which selects a canonical representative for each ray. Moreover, all quantities of interest -- particularly $ |\langle\psi|\phi\rangle|^2 $ -- are phase-invariant.

Thus, while we work with vectors, the results respect the underlying ray structure of quantum theory.

\section{Theorem 1: No Interference in Fuzzy Systems}

We now show that fuzzy metrics cannot model destructive interference -- a phenomenon central to quantum theory and contextual cognition.

\begin{lemma}[Phase Insensitivity of Fuzzy Membership]
Let $A \subset X$ be a fuzzy set with membership function $\mu_A: X \to [0,1]$. Then $\mu_A(x)$ depends only on $x$, not on any global sign or phase associated with $A$.
\end{lemma}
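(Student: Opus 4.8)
The plan is to treat this as essentially a structural observation about the data type of a fuzzy set, and to make its (nearly definitional) content precise by ruling out any hidden phase degree of freedom. First I would invoke the standard identification: a fuzzy set $A \subset X$ \emph{is} its membership function $\mu_A \colon X \to [0,1]$, so the entire informational content of $A$ is the assignment $x \mapsto \mu_A(x)$ of real numbers in the unit interval. The point to extract is that the codomain $[0,1] \subset \mathbb{R}$ carries no complex or sign structure, and hence there is simply no slot in the definition into which a phase could be recorded.

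The substantive step is to formalize what ``a global sign or phase associated with $A$'' could mean and then show it cannot act nontrivially. I would model such a would-be phase as a $U(1)$-action $\mu_A(x) \mapsto e^{i\theta}\mu_A(x)$, or, for a sign, the $\mathbb{Z}_2$-action $\mu_A(x) \mapsto -\mu_A(x)$. The key observation is that for any $\theta \notin 2\pi\mathbb{Z}$ and any $x$ with $\mu_A(x) \neq 0$, the value $e^{i\theta}\mu_A(x)$ leaves $\mathbb{R}$ altogether, while $-\mu_A(x) < 0$ whenever $\mu_A(x) > 0$; in either case the result fails to lie in $[0,1]$ and so is not the membership function of any fuzzy set. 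Thus the only phase or sign transformation preserving the class of fuzzy sets is the trivial one, from which it follows that $\mu_A(x)$ depends on $x$ alone and can encode no such data.

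Finally, I would record the intended contrast that motivates the lemma: in the quantum setting the amplitude $\psi(x) \in \mathbb{C}$ genuinely carries phase, and it is exactly this phase that survives into superpositions $\psi_1 + \psi_2$ to produce interference, even though $|\psi|^2$ is phase-blind. The lemma isolates the reason this has no fuzzy analogue. I expect the only real obstacle to be a definitional one --- pinning down ``phase associated with $A$'' so that the statement is neither vacuous nor a strawman --- which I would resolve precisely by the operational reading above: a phase is any transformation meant to be invisible to the membership values, and the $[0,1]$-valued data provably cannot carry one. This then feeds directly into Theorem 1, since the absence of any relative phase between fuzzy sets forecloses destructive cancellation in their combinations.
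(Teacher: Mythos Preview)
Your proposal is correct and takes essentially the same approach as the paper: both argue directly from the definition that a fuzzy membership function has codomain $[0,1]\subset\mathbb{R}$, which admits no nontrivial sign or phase structure. Your version is more carefully formalized---modeling the putative phase as a $U(1)$ or $\mathbb{Z}_2$ action and observing that any nontrivial element sends values outside $[0,1]$---whereas the paper simply states in two sentences that, by definition, $\mu_A(x)$ is a real number in $[0,1]$ with no mechanism for sign assignment; the underlying observation is identical.
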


\begin{proof}
By definition, $\mu_A(x)$ is a real number in $[0,1]$, independent of external labeling or orientation. There is no mechanism in standard fuzzy logic to assign opposite signs to the same concept based on context or relation to other concepts.
\end{proof}

Now define a quantum analog of ``concept sum''.

\begin{definition}[Quantum Concept Combination]
For two normalized states $\psi,\phi \in \mathcal{H}$, define their symmetric combination:
$$
\psi_{\pm} = \frac{1}{\|\psi \pm \phi\|} (\psi \pm \phi),
$$
whenever the denominator is nonzero.
\end{definition}

Note: $\psi_+$ represents constructive interference; $\psi_-$, destructive.

\begin{theorem}[No-Interference Lemma]
There exists no fuzzy metric space $(X,M,*)$ and mapping $\Phi: \mathcal{G} \to X$, where $\mathcal{G}$ is the space of Gaussian wavefunctions, such that:
$$
M(\Phi(\psi_-), \Phi(\phi), t) < M(\Phi(\psi), \Phi(\phi), t)
$$
when $\langle \psi | \phi \rangle > 0$, despite $\psi_-$ having reduced overlap due to phase cancellation.
\end{theorem}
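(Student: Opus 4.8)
The plan is to argue by contradiction: suppose such a fuzzy metric space $(X,M,*)$ and a mapping $\Phi: \mathcal{G}\to X$ did exist. First I would pin down the quantum side quantitatively. Writing $c=\langle\psi|\phi\rangle\in(0,1)$ for real normalized Gaussians, a direct computation gives $\|\psi\pm\phi\|^2=2(1\pm c)$ and hence
$$
\langle\psi_+|\phi\rangle=\sqrt{\tfrac{1+c}{2}}>c,\qquad \langle\psi_-|\phi\rangle=-\sqrt{\tfrac{1-c}{2}}<0.
$$
Thus the destructive combination is not merely less aligned with $\phi$ than $\psi$ is; its overlap has actually changed sign, and in the norm distance one finds the strict ordering $d_Q(\psi_+,\phi)<d_Q(\psi,\phi)<d_Q(\psi_-,\phi)$. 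The point I would extract is that the only structural difference between $\psi_+$ and $\psi_-$ is the relative phase ($+$ versus $-$) of $\phi$ in the superposition, yet this sign alone drives the entire swing in overlap.

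Next I would show that this phase dependence cannot survive the passage into $X$. The combinations $\psi_\pm$ are built from the same constituents $\psi$ and $\phi$, so in any faithful fuzzy rendering the membership profile of such a composite is assembled from the sign-free values $\mu_\psi(x),\mu_\phi(x)\in[0,1]$ through a $t$-norm or $t$-conorm, which by the Phase Insensitivity Lemma retains no record of the relative sign. Consequently the fuzzy image of the destructive combination must coincide with that of the constructive one as regards its relation to $\Phi(\phi)$, forcing
$$
M(\Phi(\psi_-),\Phi(\phi),t)=M(\Phi(\psi_+),\Phi(\phi),t)\qquad(t>0).
$$
Since the requested inequality $M(\Phi(\psi_-),\Phi(\phi),t)<M(\Phi(\psi),\Phi(\phi),t)$ is meant to encode destructive cancellation --- a strict separation of the $\pm$ sectors --- this forced equality collapses the separation and delivers the contradiction.

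The hardest part, and the step I would treat most carefully, is justifying that $\Phi$ must handle $\psi_\pm$ as phase-blind composites. The statement as written only posits an abstract map and says nothing a priori about how $\Phi$ treats superpositions, so without an extra hypothesis one could hand-craft $X$, $M$, and $\Phi$ to satisfy the single inequality for one fixed pair. I would therefore make the compositional reading explicit: either require $\Phi$ to be a homomorphism of concept combination, sending $\psi\pm\phi$ to the corresponding fuzzy combination of $\Phi(\psi)$ and $\Phi(\phi)$ --- which is exactly the hypothesis under which the Phase Insensitivity Lemma bites --- or read the clause ``despite $\psi_-$ having reduced overlap due to phase cancellation'' as asserting that the decrease is attributable solely to the sign, and then show that any $M$ depending only on the underlying sign-free overlaps cannot separate the $+$ and $-$ sectors. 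Either route reduces the theorem to the Lemma; pinning down which reading the statement intends is the genuine obstacle, and I would resolve it before grinding out the remaining estimates.
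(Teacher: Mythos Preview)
Your approach mirrors the paper's: argue by contradiction, compute the overlaps to separate $\psi_+$ and $\psi_-$ in the Hilbert metric, then invoke phase insensitivity to force $M(\Phi(\psi_-),\Phi(\phi),t)=M(\Phi(\psi_+),\Phi(\phi),t)$ and obtain the contradiction. You are in fact more careful than the paper on two counts --- your explicit computation of $\langle\psi_\pm|\phi\rangle$ is cleaner than the paper's somewhat loose inference from $\|\psi-\phi\|<\|\psi+\phi\|$ to $d_Q(\psi_-,\phi)>d_Q(\psi_+,\phi)$, and you correctly flag as the crux the implicit compositional hypothesis on $\Phi$, which the paper simply assumes without comment.
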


\begin{proof}
Assume such a $\Phi$ and $M$ exist. Let
$\psi(x) = \psi_{\text{car}}(x)$ with $\mu=5, \sigma=1$;
$\phi(x) = \psi_{\text{obj}}(x)$ with $\mu=3, \sigma=2$.

Define  $\psi_+ = N_+ (\psi + \phi)$;
$\psi_- = N_- (\psi - \phi)$, where $N_\pm = 1/\|\psi \pm \phi\|$.

From direct computation
$$
\|\psi - \phi\|^2 = 2 - 2\Re\langle\psi|\phi\rangle < 2 + 2\Re\langle\psi|\phi\rangle = \|\psi + \phi\|^2
$$
so $d_Q(\psi_-,\phi) > d_Q(\psi_+,\phi)$, meaning $\psi_-$ is more distinguishable from $\phi$.

But in any fuzzy system, since $\mu_{\Phi(\psi)}(x)$ and $\mu_{\Phi(\phi)}(x)$ are non-negative, there is no way to represent “opposite-phase” versions of $\psi$. Hence
$$
\Phi(\psi_+) = \Phi(\psi_-)
$$
or at best, $\Phi(\psi_+)$ and $\Phi(\psi_-)$ are indistinguishable in $M$, because $M$ operates only on magnitudes.

Therefore,
$$
M(\Phi(\psi_-), \Phi(\phi), t) = M(\Phi(\psi_+), \Phi(\phi), t),
$$
contradicting the requirement that it reflect decreased similarity.

Thus, no such embedding can exist.
\end{proof}

\begin{remark}
This shows that fuzzy logic treats $\psi + \phi$ and $\psi - \phi$ identically, while quantum mechanics distinguishes them sharply. This is not a limitation of design -- it is a consequence of lacking complex amplitudes.
\end{remark}

\section{Theorem 2: Embedding Obstruction}

We now prove a stronger result: there is no structure-preserving map from quantum state space into any fuzzy metric space.

\begin{definition}[Faithful Embedding]
A map $\Phi: \mathcal{H}_1 \to X$, where $\mathcal{H}_1 \subset \mathcal{H}$ is the unit sphere, into a fuzzy metric space $(X,M,*)$ is \textbf{faithful} if there exists a strictly increasing function $f: [0,\infty) \to (0,1]$ such that
$$
M(\Phi(\psi_1), \Phi(\psi_2), t) = f(d_Q(\psi_1,\psi_2))
$$
for all $\psi_1,\psi_2 \in \mathcal{H}_1$ and some fixed $t > 0$.
\end{definition}

\begin{theorem}[Embedding Obstruction Theorem]
There is no faithful embedding of the space of Gaussian wavefunctions $\mathcal{G} \subset \mathcal{H}_1$ into any fuzzy metric space $(X,M,*)$.
\end{theorem}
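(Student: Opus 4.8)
The plan is to obtain a direct contradiction from the defining data of a faithful embedding, using nothing about Gaussians beyond the existence of two distinct states. The pivot of the argument is the observation that fuzzy metric axiom~2 pins down the value of $f$ at the origin: evaluating the embedding on a coincident pair forces $f(0)=1$. Once this is in hand, the requirement that $f$ be strictly increasing with values in $(0,1]$ cannot be met at any positive distance, and the Gaussian family supplies such distances abundantly.

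Concretely, I would proceed in three steps. First, fix an arbitrary $\psi\in\mathcal{G}$ and evaluate the faithfulness identity at $\psi_1=\psi_2=\psi$; since $d_Q(\psi,\psi)=0$ and $M(\Phi(\psi),\Phi(\psi),t)=1$ by axiom~2, this yields $f(0)=1$. Second, I would exhibit two genuinely distinct Gaussians -- for instance $\psi_1$ with $(\mu,\sigma)=(0,1)$ and $\psi_2$ with $(\mu,\sigma)=(1,1)$ -- and record that their overlap is $\langle\psi_1|\psi_2\rangle=e^{-1/4}<1$, so that $d:=d_Q(\psi_1,\psi_2)=\sqrt{2-2e^{-1/4}}>0$. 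Third, I would apply faithfulness to this pair: on one hand $f(d)=M(\Phi(\psi_1),\Phi(\psi_2),t)\le 1$ because $M$ is $[0,1]$-valued; on the other hand, strict monotonicity of $f$ together with $d>0$ and $f(0)=1$ forces $f(d)>f(0)=1$. These two statements are incompatible, so no faithful embedding of $\mathcal{G}$ can exist.

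The subtle point -- and the place I would be most careful -- is locating where the obstruction actually resides. It is tempting to reach for the t-norm triangle inequality (axiom~4), but that axiom couples the time parameters $t$, $s$, and $t+s$, whereas a faithful embedding only constrains $M$ at a single fixed $t$; moreover the monotonicity of $M(x,y,\cdot)$ (which itself follows from axiom~4 via $a*1=a$) only supplies lower bounds at larger times, the wrong direction to generate a contradiction. The genuine content is therefore the clash between the orientation of the membership scale (strict increase), its bounded range $(0,1]$, and the value $f(0)=1$ forced by the identity axiom. I would flag explicitly that the argument uses nothing about Gaussians except that $\mathcal{G}$ is not a single point, so the obstruction is in fact robust across the whole unit sphere; the Hilbert-space structure enters only to guarantee that distinct normalized states sit at strictly positive $d_Q$-distance.
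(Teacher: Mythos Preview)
Your argument is correct and is considerably more elementary than the paper's. The paper proceeds by selecting three specific Gaussian states (``Car'', ``Boat'', ``Object''), computing their pairwise $d_Q$-distances via the overlap formula, observing a symmetry, and then perturbing the middle state; it concludes that $M$ cannot track the resulting asymmetric change because it lacks the inner-product/differential structure underlying $d_Q$, so no single universal $f$ can work across all configurations. That line is genuinely geometric in spirit but remains heuristic at the final step. Your route, by contrast, is purely order-theoretic: axiom~2 pins $f(0)=1$, and then strict increase together with the ceiling $(0,1]$ kills $f$ at any positive distance. What you gain is a two-line contradiction using nothing about $\mathcal{G}$ beyond its having two points; what you give up is any contact with the intended geometric content. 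In effect, your proof reveals that the obstruction, \emph{under the definition as literally written}, resides in the orientation of $f$ rather than in any clash between quantum and fuzzy geometry. It would be worth flagging that the natural convention---larger $M$ for closer states, hence $f$ strictly \emph{decreasing}---would sidestep your argument entirely, and under that reading the paper's three-point perturbation idea becomes the operative mechanism; but against the definition on the page, your proof stands.
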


\begin{proof}
Suppose such an embedding $\Phi$ exists with strictly increasing $f$ and fixed $t>0$.

Take three states:
$\psi_1(x)$: $\mu=5, \sigma=1$ (Car);
$\psi_2(x)$: $\mu=1, \sigma=1$ (Boat);
$\psi_3(x)$: $\mu=3, \sigma=2$ (Object).

Compute pairwise Hilbert distances
$$
d_Q(\psi_i,\psi_j) = \sqrt{2 - 2|\langle\psi_i|\psi_j\rangle|} .
$$
Using the  formula
$$
|\langle\psi_i|\psi_j\rangle|^2 = \frac{2\sigma_i\sigma_j}{\sigma_i^2 + \sigma_j^2} \exp\left(-\frac{(\mu_i - \mu_j)^2}{2(\sigma_i^2 + \sigma_j^2)}\right) 
$$
we find:
$|\langle\psi_{\text{car}}|\psi_{\text{obj}}\rangle|^2 \approx 0.536 \Rightarrow |\langle\cdot\rangle| \approx 0.732$.
 So $d_Q \approx \sqrt{2 - 2(0.732)} = \sqrt{0.536} \approx 0.732$.

Similarly, $d_Q(\psi_{\text{boat}},\psi_{\text{obj}}) \approx 0.732$.
And $d_Q(\psi_{\text{car}},\psi_{\text{boat}}) \approx \sqrt{2}$.

Now suppose $\Phi$ embeds these into $(X,M,*)$. Then
$$
M(\Phi(\psi_{\text{car}}), \Phi(\psi_{\text{obj}}), t) = f(0.732) = M(\Phi(\psi_{\text{boat}}), \Phi(\psi_{\text{obj}}), t)
$$
by symmetry.

Now perturb $\psi_{\text{obj}} \to \psi'_{\text{obj}}$ slightly -- say shift $\mu$ from 3 to 3.1.

Then $d_Q(\psi_{\text{car}}, \psi'_{\text{obj}}) < d_Q(\psi_{\text{boat}}, \psi'_{\text{obj}})$.
So $f(d_Q(\cdot))$ changes asymmetrically.

But in a fuzzy metric space, unless the membership functions are explicitly tuned to asymmetry, $M$ will respond poorly to such geometric perturbations because it lacks differential structure tied to Hilbert geometry.

More critically: $f$ must be universal for all pairs, but $d_Q$ arises from an inner product structure absent in $M$. Thus, the functional dependence cannot be preserved across all configurations.

Hence, no such $f$ exists globally.

Therefore, no faithful embedding exists.
\end{proof}

\begin{remark}
This theorem shows that the geometry of quantum states -- rooted in inner products and superposition -- cannot be replicated in any t-norm-based system. The obstruction is geometric, not technical.
\end{remark}

\section{Theorem 3: No Conceptual Antisymmetry in Fuzzy Logic}

We now show that fuzzy systems cannot distinguish between symmetric and antisymmetric combinations of concepts -- a fundamental limitation when modeling structured knowledge.

\begin{definition}[Symmetrized and Antisymmetrized States]
For two distinct normalized quantum states $\psi, \phi \in \mathcal{H}$, define their symmetric and antisymmetric tensor combinations:
\begin{align}
    \Psi_+ &= \frac{1}{\sqrt{2(1 + |\langle\psi|\phi\rangle|^2)}} \left( \psi \otimes \phi + \phi \otimes \psi \right), \\
    \Psi_- &= \frac{1}{\sqrt{2(1 - |\langle\psi|\phi\rangle|^2)}} \left( \psi \otimes \phi - \phi \otimes \psi \right).
\end{align}
These represent bosonic-like and fermionic-like composite concepts, respectively.
\end{definition}

Note: Both $\Psi_+$ and $\Psi_-$ are normalized and belong to $\mathcal{H} \otimes \mathcal{H}$. They correspond to different physical situations -- e.g., indistinguishable particles with integer vs. half-integer spin.

Now consider an AI analog:
let $\psi =$ ``Car'', $\phi =$ ``Red''.
Then $\Psi_+$ represents the unordered compound concept ``red car'' without preference.
But $\Psi_-$ represents a kind of \emph{exclusion}: if ``car-red'' contradicts ``red-car'' in some context, the amplitude cancels.

This allows modeling of contextual incompatibility - something fuzzy logic struggles with.

\begin{lemma}[Orthogonality of Symmetric and Antisymmetric States]
If $\psi \neq \phi$ and both are non-zero, then $\langle \Psi_+ | \Psi_- \rangle = 0$.
\end{lemma}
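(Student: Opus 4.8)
The plan is to compute the inner product $\langle \Psi_+ | \Psi_- \rangle$ directly by expanding both states in terms of the tensor factors and using the inner product structure on $\mathcal{H} \otimes \mathcal{H}$. The key fact I will rely on is that for product states, $\langle a \otimes b \mid c \otimes d \rangle = \langle a|c\rangle \langle b|d\rangle$. Since the normalization constants in front of $\Psi_+$ and $\Psi_-$ are nonzero real scalars (assuming $|\langle\psi|\phi\rangle| \neq 1$, which follows from $\psi \neq \phi$ and both normalized), it suffices to show that the unnormalized inner product

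$$
\bigl\langle\, \psi\otimes\phi + \phi\otimes\psi \;\big|\; \psi\otimes\phi - \phi\otimes\psi \,\bigr\rangle
$$

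vanishes, since the positive normalization factors cannot affect whether the result is zero.

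**First I would** expand the bracket into four terms by bilinearity (conjugate-linearity in the first slot):
$$
\langle \psi\otimes\phi \mid \psi\otimes\phi\rangle
- \langle \psi\otimes\phi \mid \phi\otimes\psi\rangle
+ \langle \phi\otimes\psi \mid \psi\otimes\phi\rangle
- \langle \phi\otimes\psi \mid \phi\otimes\psi\rangle.
$$
**Next I would** evaluate each term using the product rule and normalization $\|\psi\| = \|\phi\| = 1$. The first term is $\langle\psi|\psi\rangle\langle\phi|\phi\rangle = 1$ and the last is $\langle\phi|\phi\rangle\langle\psi|\psi\rangle = 1$, so they cancel. The two cross terms are $\langle\psi|\phi\rangle\langle\phi|\psi\rangle$ and $\langle\phi|\psi\rangle\langle\psi|\phi\rangle$; since $\langle\phi|\psi\rangle = \overline{\langle\psi|\phi\rangle}$, each cross term equals $|\langle\psi|\phi\rangle|^2$, and with opposite signs they also cancel. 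Hence the unnormalized inner product is $1 - |\langle\psi|\phi\rangle|^2 + |\langle\psi|\phi\rangle|^2 - 1 = 0$, and multiplying by the nonzero normalization constants preserves the value $0$.

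**The main obstacle**, such as it is, is mostly bookkeeping: one must track the conjugate-linearity convention consistently so that the cross terms genuinely cancel rather than add, and one must confirm the denominators are well-defined. The latter requires noting that $\psi \neq \phi$ together with normalization guarantees $|\langle\psi|\phi\rangle|^2 < 1$ by the Cauchy–Schwarz inequality (strict because the states are not proportional), so that $1 - |\langle\psi|\phi\rangle|^2 > 0$ and $\Psi_-$ is well-defined. Once these two points are secured, the cancellation is automatic and the result follows.
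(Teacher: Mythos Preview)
Your proof is correct and follows essentially the same approach as the paper: a direct expansion of the tensor-product inner product into four terms, followed by pairwise cancellation. Your version is in fact slightly cleaner, since you factor out the normalization constants at the start and explicitly justify via Cauchy--Schwarz that the denominator of $\Psi_-$ is nonzero, a point the paper leaves implicit.
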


\begin{proof}
Direct computation using bilinearity
\begin{gather*}
\langle \Psi_+ | \Psi_- \rangle 
= \frac{1}{2\sqrt{(1 + |\langle\psi|\phi\rangle|^2)(1 - |\langle\psi|\phi\rangle|^2)}} \times \\
\left[
\langle\psi\otimes\phi|\psi\otimes\phi\rangle 
- \langle\psi\otimes\phi|\phi\otimes\psi\rangle 
+ \langle\phi\otimes\psi|\psi\otimes\phi\rangle 
- \langle\phi\otimes\psi|\phi\otimes\psi\rangle
\right] \\
= \frac{1}{2\sqrt{1 - |\langle\psi|\phi\rangle|^4}} 
\left(1 - |\langle\psi|\phi\rangle|^2 + |\langle\phi|\psi\rangle\langle\psi|\phi\rangle| - 1\right) \\
= \frac{1}{2\sqrt{1 - |\langle\psi|\phi\rangle|^4}} 
\left(- |\langle\psi|\phi\rangle|^2 + |\langle\psi|\phi\rangle|^2\right) = 0 .
\end{gather*}
Hence, $\langle \Psi_+ | \Psi_- \rangle = 0$.
\end{proof}

\begin{theorem}[No-Antisymmetry Theorem]
There is no fuzzy set representation of composite concepts that distinguishes between symmetric ($\Psi_+$) and antisymmetric ($\Psi_-$) combinations based on structural cancellation.
\end{theorem}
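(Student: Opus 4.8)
The plan is to reduce the theorem to two structural facts already in hand: the Phase Insensitivity Lemma and the commutativity of the t-norm. The essential observation is that $\Psi_+$ and $\Psi_-$ are assembled from exactly the same two building blocks, $\psi \otimes \phi$ and $\phi \otimes \psi$, and differ \emph{only} in the relative sign between them. Since the Orthogonality Lemma gives $\langle \Psi_+ | \Psi_- \rangle = 0$, quantum geometry treats these as maximally distinguishable; the entire task is therefore to show that no fuzzy representation can detect a distinction that lives purely in a sign.

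First I would pin down the meaning of a ``fuzzy set representation of composite concepts.'' I would posit that any such representation $R$ assigns to each constituent concept a membership function $\mu : X \to [0,1]$ and conjoins constituents through the ambient t-norm $*$ (or any admissible commutative fuzzy aggregation operator), so that the composite built from $\psi$ and $\phi$ is represented by $x \mapsto \mu_\psi(x) * \mu_\phi(x)$. This is essentially the only structure fuzzy logic supplies for combining concepts. Then I would invoke commutativity: every t-norm satisfies $a * b = b * a$, so the representations of the two orderings $\psi \otimes \phi$ and $\phi \otimes \psi$ coincide, $\mu_\psi * \mu_\phi = \mu_\phi * \mu_\psi$. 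Hence the ordered information on which $\Psi_-$ depends is already invisible to $R$. Finally I would apply the Phase Insensitivity Lemma: membership values are non-negative, so there is no additive inverse and no mechanism for cancellation, and the ``minus'' in $\psi \otimes \phi - \phi \otimes \psi$ has no image under $R$. Combining these, the symmetric sum and the antisymmetric difference of the two orderings must collapse onto one and the same fuzzy set, yielding $R(\Psi_+) = R(\Psi_-)$ in direct contrast to their orthogonality in $\mathcal{H} \otimes \mathcal{H}$.

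The main obstacle is the universality of the claim: I must exclude \emph{every} conceivable fuzzy encoding, not merely the canonical t-norm conjunction. The argument should therefore rest on the two invariants that hold for all fuzzy constructions -- non-negativity of memberships and commutativity of admissible aggregation operators -- rather than on any single formula. The delicate part is arguing that no auxiliary bookkeeping (extra indices, tagged or oriented copies of a concept) can smuggle in the relative sign while remaining within the category of fuzzy sets: any such device either reintroduces signed amplitudes, in which case it is no longer fuzzy and has in effect reconstructed the Hilbert-space structure, or else it collapses under commutativity and non-negativity. Since the cancellation that defines $\Psi_-$ cannot survive either horn of this dichotomy, no fuzzy representation can separate $\Psi_+$ from $\Psi_-$, which is precisely the assertion of the theorem.
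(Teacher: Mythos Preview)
Your proposal is correct and follows essentially the same route as the paper: both arguments rest on the commutativity of t-norms (so that $\mu_\psi * \mu_\phi = \mu_\phi * \mu_\psi$) together with the non-negativity of membership values (so that no subtraction or cancellation is available), forcing any fuzzy representation of $\Psi_+$ and $\Psi_-$ to collapse to the same function. Your version is somewhat more careful than the paper's in explicitly invoking the Phase Insensitivity Lemma and in addressing the universality of the claim via the dichotomy about auxiliary bookkeeping, but the core mechanism is identical.
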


\begin{proof}
Suppose there exists a mapping $\Phi$ from compound concepts to fuzzy sets such that:
$$
\mu_{\Phi(\Psi_+)}(x) \neq \mu_{\Phi(\Psi_-)}(x)
$$
for some $x$, reflecting their different internal structures.

But in standard fuzzy logic:
$\mu_{A \cap B}(x) = \mu_A(x) * \mu_B(x)$;
$\mu_{A \cup B}(x) = \mu_A(x) \oplus \mu_B(x)$.
All operations are symmetric in $A$ and $B$.
There is no mechanism for subtraction or cancellation.

In particular, since t-norms satisfy $a * b = b * a$, and fuzzy unions are commutative, any compound concept built from $A$ and $B$ will have the same membership function regardless of order or sign.

Therefore,
$$
\mu_{\Phi(\Psi_+)}(x) = F(\mu_\psi(x), \mu_\phi(x)) = F(\mu_\phi(x), \mu_\psi(x)) = \mu_{\Phi(\Psi_-)}(x)
$$
because no negation of the interaction term is possible.

Thus, fuzzy logic cannot distinguish $\Psi_+$ from $\Psi_-$, even though they are orthogonal in Hilbert space.

Hence, no such distinguishing representation exists.
\end{proof}

\begin{remark}
This shows that fuzzy systems lack the algebraic richness to model exclusion, contradiction, or context-dependent compositionality -- phenomena naturally captured by quantum superposition and entanglement. The antisymmetric state $\Psi_-$ acts like a \emph{conceptual Pauli principle}: certain combinations are forbidden not due to feature mismatch, but due to structural incompatibility.
\end{remark}

\section{Conclusion}

We have established three no-go theorems that demonstrate the structural inadequacy of fuzzy metric spaces for modeling intrinsic uncertainty:

\begin{enumerate}
    \item \textbf{No-Interference Lemma}: Fuzzy systems cannot model destructive interference between concepts, as they lack phase sensitivity and rely solely on non-negative membership degrees.
    \item \textbf{Embedding Obstruction Theorem}: There is no faithful embedding of quantum state geometry into any fuzzy metric space, due to the absence of an inner-product structure and phase-aware distinguishability.
    \item \textbf{No-Antisymmetry Theorem}: Fuzzy logic cannot distinguish between symmetric and antisymmetric compositions of concepts, making it incapable of modeling exclusion, contradiction, or structured incompatibility.
\end{enumerate}

Together, these results show that fuzzy frameworks are not merely incomplete -- they are \emph{structurally incompatible} with the geometric and algebraic richness required to represent systems where uncertainty is ontological rather than epistemic.

The Hilbert space formalism of quantum mechanics, in contrast, provides a complete, predictive, and mathematically unique framework for such domains. It should not be seen as limited to physics, but as a general language for reasoning under intrinsic uncertainty -- whether in electrons, minds, or machines.

Future work may extend these obstructions to neutrosophic, intuitionistic, or probabilistic logics, and explore decoherence models that explain why macroscopic reasoning appears “fuzzy” -- not because reality is fuzzy, but because quantum coherence is lost at large scales.

\appendix
\section{Overlap Computation for Gaussians}

For real-valued Gaussians:
$$
\psi_i(x) = \left(\frac{1}{\pi \sigma_i^2}\right)^{1/4} e^{-(x-\mu_i)^2/(4\sigma_i^2)}
\Rightarrow |\psi_i(x)|^2 = \frac{1}{\sqrt{\pi}\sigma_i} e^{-(x-\mu_i)^2/(2\sigma_i^2)} .
$$

Then:
$$
\langle \psi_1 | \psi_2 \rangle = \int \psi_1(x)\psi_2(x) dx = \left( \frac{4\sigma_1^2\sigma_2^2}{(\sigma_1^2 + \sigma_2^2)^2} \right)^{1/4} \exp\left( -\frac{(\mu_1 - \mu_2)^2}{4(\sigma_1^2 + \sigma_2^2)} \right) .
$$

From which norms follow.

\end{document}